\begin{document}

\title{Errata Note: Discovering Order Dependencies through Order Compatibility}

\numberofauthors{3} 
\author{
\alignauthor
Parke Godfrey\\
       \affaddr{York University}\\
       \affaddr{Ontario, Canada}\\
       \email{godfrey@yorku.ca}
\alignauthor
Lukasz Golab\\
       \affaddr{University of Waterloo}\\
       \affaddr{Ontario, Canada}\\
       \email{lgolab@uwaterloo.ca}
\alignauthor
Mehdi Kargar\\
       \affaddr{Ryerson University}\\
       \affaddr{Ontario, Canada}\\
       \email{kargar@ryerson.ca}
\and  
\alignauthor
Divesh Srivastava\\
       \affaddr{AT\&T Labs-Research}\\
       \affaddr{Ontario, Canada}\\
       \email{divesh@research.att.com}
\alignauthor
Jaroslaw Szlichta\\
       \affaddr{Ontario Tech University}\\
       \affaddr{Ontario, Canada}\\
       \email{jarek@uoit.ca}
}

\makeatletter
\def\@copyrightspace{\relax}
\makeatother





\maketitle
\begin{abstract}
A number of extensions
to the classical notion of functional dependencies have been proposed
to express and enforce application semantics.
One of these extensions is that of order dependencies (ODs),
which express rules involving order.
The article entitled 
``Discovering Order Dependencies through Order Compatibility''
by Consonni et al.,
published in the EDBT conference proceedings in March 2019,
investigates the OD discovery problem.
They claim to prove that their OD discovery algorithm,
OCDDISCOVER,
is \emph{complete},
as well as being significantly more efficient in practice
than the state-of-the-art.
They further claim that the implementation
of the existing FASTOD algorithm
(ours)---%
we shared our code base with the authors---%
which they benchmark against is flawed,
as OCDDISCOVER and FASTOD report different sets of ODs
over the same data sets.

In this rebuttal,
we show that their claim of completeness is, in fact, \emph{not} true.
Built upon their incorrect claim,
OCDDISCOVER's pruning rules are overly aggressive,
and prune parts of the search space that contain legitimate ODs.
This is the reason their approach appears to be ``faster'' in practice. Finally, we show that Consonni et al. misinterpret our set-based canonical form for ODs, leading to an incorrect claim that our FASTOD implementation has an error. 


\end{abstract}


\section{Introduction}\label{section:intro}

Integrity constraints specify the intended semantics of dataset attribu\-tes.
They are commonly used in a number of application areas,
such as schema design,
data integration,
data cleaning, and
query optimization~\cite{CFG07}.
Past work focused primarily on \emph{functional dependencies} (FDs).
In recent years,
several extensions to the notion of an FD have been studied,
including that of \emph{order dependencies} (ODs)
\cite{%
    CSM19,
    LN16,
    SGG+17,
    SGG+18,
    Szl2012,
    SGG+2013
}.
FDs cannot capture relationships among attributes
with naturally \emph{ordered} domains,
such as over timestamps, numbers, and strings,
which are common in business data \cite{Szli2012}.
For example,
consider Table~\ref{table:taxes},
which shows employee tax records in which $\AB{tax}$ is calculated
as a $\AB{percentage}$ of $\AB{salary}$.
Both $\AB{tax}$ and $\AB{percentage}$ increase with $\AB{salary}$.

Order dependencies naturally express such semantics.
For a second example from Table~\ref{table:taxes},
the OD $\AB{salary}$ \emph{orders} $\AB{group}$, $\AB{subgroup}$ holds.
When the table is sorted by $\AB{salary}$,
it is also then sorted by $\AB{group}$
(with ties broken by $\AB{subgroup}$).
However,
$\AB{salary}$ \emph{orders} $\AB{subgroup}$, $\AB{group}$ does \emph{not} hold. This illustrates that the \emph{order} of attributes matters.

The theory of order dependency subsumes that of functional dependency.
Any FD can be \emph{mapped} to an equivalent OD
by prefixing the left-hand-side attributes
onto the right-hand side
\cite{Szl2012,SGG+2013}.
For example,
if $\AB{salary}$ \emph{functionally determines} $\AB{tax}$,
then $\AB{salary}$ \emph{orders} $\AB{salary}, \AB{tax}$.

The purpose of this article is to
refute the following claims
in Consonni et al.~
\cite{CSM19}.

\begin{enumerate}[nolistsep,leftmargin=*]
\item \label{CLAIM minimality}
    The authors present a definition of \emph{minimality}
    for order compatibility dependencies (OCDs).
    An OCD is a more specific form of order dependency
    in which two lists of attributes order each other,
    when taken together
    \cite{Szl2012}. 
    Consonni et al.~\cite{CSM19} claim
    that their definition of minimality is \emph{complete};
    that is,
    from it,
    one can recover all valid OCDs
    that hold over a given table.
    
\item \label{CLAIM completeness}
    Given their definition of minimal OCDs,
    Consonni et al.~\cite{CSM19} propose an algorithm
    to \emph{discover} ODs via OCDs,
    which has factorial complexity in the number of attributes.
    They claim to prove that their algorithm produces
    a canonically \emph{complete} set of ODs.
    (That is,
    a \emph{minimal} set of ODs
    with respect to their definition,
    from which all the ODs which hold over the data
    could purportedly be inferred.)
    
\item \label{CLAIM efficiency}
    The authors claim that their experimental evaluation illustrates
    an \emph{implementation error}
    in our implementation of our OD discovery algorithm (FASTOD)
    \cite{SGG+17,SGG+18},
    which leads to ours discovering many additional---%
    and, purportedly, incorrect---%
    depende\-ncies.
    In spite of this claim of an ``implementation error''
    in the FASTOD implementation that we provided them,
    they support via benchmark experiments
    that their algorithm, OCDDISCOVER, outperforms
    our algorithm, FASTOD.
\end{enumerate}

\noindent 
We show that each of these three claims is incorrect,
in turn.

\begin{enumerate}[nolistsep,leftmargin=*]
\item \label{REFUTE minimality}
    The definition of minimality
    in Consonni et al.~\cite{CSM19}---%
    insofar as its intended purpose is a \emph{canonical} form---%
    is incorrect.
    Their ``canonical'' form does not allow
    for the inference of \emph{all} OCDs.
    It misses an important subclass of OCDs (and, respectively, ODs),
    any dependency which has a common prefix on the \emph{left} and \emph{right}
    (that is, repeated attributes at the beginning of the dependency).

\item \label{REFUTE completeness}
    The claim of completeness of the OD discovery algorithm
    in Consonni et al.~
    \cite{CSM19}
    is \emph{incorrect},
    as it relies upon their incorrect notion of ``minimal'' OCDs.
    Their conjecture that their algorithm is complete is incorrect;
    it is incomplete.

\item \label{REFUTE efficiency}
    Consonni et al.~\cite{CSM19} misinterpret our set-based canonical form
    for ODs
    \cite{SGG+17,SGG+18}
    (which is equivalent to the list-based canonical form for ODs).
    This leads the authors to confuse set-based OCDs with ODs.
    Their claim that our implementation has an error arises from this,
    and their belief that their approach is complete.
    Consonni et al.~\cite{CSM19} conclude that their algorithm is faster
    in practice,
    despite being significantly worse in asymptotic complexity.
    This arises in their benchmark experiments,
    however,
    due to the fact that their algorithm is incomplete,
    
\end{enumerate}

\begin{table}[t]
\begin{small}
\caption{Table with employee information.}\label{table:taxes}
    \begin{tabular}{|p{0.35cm}|p{0.35cm}|p{0.35cm}|p{0.65cm}|p{0.35cm}|p{0.5cm}|p{0.55cm}|p{0.58cm}|p{0.4cm}|p{0.55cm}|}
        \hline
        \A{\#} & \A{ID} & \A{yr} & \A{posit} & \A{bin} & \A{sal} & \A{perc} & \A{tax} & \A{grp} & \A{subg} \Tstrut\\
        \hline \hline
        $\tup{t1}$ & 10 &  19 &  secr & 1 & 5K &  20\% &  1K & A & III \Tstrut\\
        \hline
        $\tup{t2}$ & 11 &  19 &  mngr &  2 & 8K &  25\% &  2K & C & II \Tstrut\\
        \hline
        $\tup{t3}$ & 12 &  19 &  direct &  3 & 10K &  30\% &  3K & D & I \Tstrut\\
        \hline
        \hline
        $\tup{t4}$ & 10 &  18 &  secr &  1 & 4.5K &  20\% &  0.9K & A & III \Tstrut\\
        \hline
        $\tup{t5}$ & 11 &  18 &  mngr &  2 & 6K &  25\% &  1.5K & C & I \Tstrut\\
        \hline
        $\tup{t6}$ & 12 &  18 &  direct &  3 & 8K &  25\% &  2K & C & II \Tstrut\\
        \hline
    \end{tabular}
\end{small}
 \centering
\end{table}

In Section~\ref{sec:foundations}
we provide basic definitions and canonical forms for ODs.
In Section~\ref{section:completness},
we analyze the completeness of OD discovery.
In Section~\ref{section:evaluation},
we discuss the experimental evaluation conducted
by Consonni et al.~\cite{CSM19}.
We conclude in Section~\ref{section:conc}.
\section{Foundations}\label{sec:foundations}
\subsection{Background}

We use the notational conventions in Table~\ref{table:Notation}.
Next, we provide a summary of the relevant definitions. The operator `\orelOP[\lst{X}]' defines a {\em weak total order} over any set of tuples, where $\lst{X}$ denotes a list of attributes. Unless otherwise specified, numbers are ordered numerically, strings are ordered lexicographically and dates are ordered chronologically.

\begin{definition}~\cite{SGG+17,SGG+18}
\label{definition:operatorle}
Let \lst{X}\ be a list of attributes. For two tuples \tup{t}\ and \tup{s}, $\set{X} \in \R{R}$,
\(\tup{t}\orel[\lst{X}]\tup{s}\) if%
\footnote{
    By some conventions,
    ``\emph{iff}''---%
    ``if and only if---%
    would be used here.
    The intent, in any case,
    is that this defines completely the notion.
}
\begin{itemize}[nolistsep,leftmargin=*]
\item[--]
    \(\lst{X} = \Lst{}\);
    {\em or}
\item[--]
    \(\lst{X} = \Lst{\A{A}}[\lst{T}]\)
    and \(\tup{t}[\A{A}] < \tup{s}[\A{A}]\); or
\item[--]
    \(\lst{X} = \Lst{\A{A}}[\lst{T}]\),
    \(\tup{t}[\A{A}] = \tup{s}[\A{A}]\),
    and
    \(\tup{t}\orel[\lst{T}]\tup{s}\).
\end{itemize}
Let \(\tup{t}\orelStrict[\lst{X}]\tup{s}\) if
    \(\tup{t}\orel[\lst{X}]\tup{s}\) but
    \(\tup{s}\notOrel[\lst{X}]\tup{t}\).
\end{definition}

Next, we define order dependencies.

\begin{definition}~\cite{CSM19,LN16,SGG+17,SGG+18,Szl2012,SGG+2013}
\label{definition:OrderDependency}
Let \lst{X}\ and \lst{Y}\ be lists of attributes over a relation schema $\R{R}$. 
%
Table \T{r}\ over $\R{R}$ {\em satisfies} an OD \orders{\lst{X}}{\lst{Y}}\ (\(\T{r} \models \orders{\lst{X}}{\lst{Y}}\)), read as \lst{X}\ {\em orders} \lst{Y},
if for all \(\tup{t}, \tup{s} \in \T{r}\), \(\tup{t}\orel[\lst{X}]\tup{s}\) implies \(\tup{t}\orel[\lst{Y}]\tup{s}\).
\orders{\lst{X}}{\lst{Y}}\ is said to {\em hold} for \R{R}\ (\(\R{R} \models \orders{\lst{X}}{\lst{Y}}\)) if, for each admissible table instance \T{r}\ of \R{R}, table \T{r}\ satisfies \orders{\lst{X}}{\lst{Y}}.
\orders{\lst{X}}{\lst{Y}}\ is {\em trivial} if, for all \T{r}, \(\T{r} \models \orders{\lst{X}}{\lst{Y}}\). \orderEquiv{\lst{X}}{\lst{Y}}, read as \lst{X}\ and \lst{Y}\ are {\em order equivalent}, if $\orders{\lst{X}}{\lst{Y}}$ and $\orders{\lst{Y}}{\lst{X}}$.
\end{definition}

The OD $\orders{\lst{X}}{\lst{Y}}$ means that $\lst{Y}$ values are monotonically non-decreasing wrt $\lst{X}$ values. Thus, if a list of tuples is ordered by $\lst{X}$, then it is also ordered by $\lst{Y}$, but not necessarily vice versa.

\begin{example}\label{example:taxes}
Consider Table~\ref{table:taxes} in which tax is calculated as a percentage of salary, and tax groups and subgroups are based on salary.  Tax, percentage and group are not decreasing with salary.  Furthermore, within the same group, subgroup are not decreasing with salary.  Finally, within the same year, bin increases with salary.  Thus, the following order dependencies hold in that table:
    $\orders{ \Lst{ \AB{salary} } }{ \Lst{ \AB{tax} } }$,
     $\orders{ \Lst{ \AB{salary} } }{ \Lst{ \AB{percentage} } }$,
    $\orders{ \Lst{ \AB{salary} } },
        { \Lst{ \AB{group}, \AB{subgroup} } }$ and 
    $\orders{ \Lst{ \AB{year, salary} } }{ \Lst{ \AB{year}, \AB{bin} } }$.
\end{example}

\begin{definition}~\cite{Szl2012,SGG+2013}
    \label{definition:orderCompatible}
Two order specifications \lst{X}\ and \lst{Y}\ are {\em order
compatible}, denoted as \(\lst{X} \sim \lst{Y}\), if
\orderEquiv{\lst{XY}}{\lst{YX}}. ODs in the form of $\simular{\lst{X}}{\lst{Y}}$ are called order compatible dependencies (OCDs)
\end{definition}

The empty list of attributes (i.e., $\emptyLst{}$) is order compatible with {\em any} list of attributes. There is a strong relationship between ODs and FDs. Any OD implies an FD, modulo lists and sets, however, not vice versa.

\begin{table}[t]
\caption{Notational conventions.}
\label{table:Notation}
\centering
\begin{normalsize}
\begin{itemize}
    \item \textbf{Relations.}
            \R{R}\ denotes a {\em relation schema} and
            \T{r}\ denotes a specific {\em table} instance.
            Letters from the beginning of the alphabet, \A{A}, \A{B}\ and \A{C},
            denote single \emph{attributes}.
            Additionally, \tup{t}\ and \tup{s}\
            denote \emph{tuples}, and
            \proj{t}{\A{A}}\ denotes
            the value of an attribute \A{A}\ in a tuple \tup{t}.
    \item \textbf{Sets.}
            Letters from the end of the alphabet, \set{X}, $\set{Y}$ and \set{Z}, denote \emph{sets} of attributes.
            Also, \proj{t}{\set{X}}\ denotes
            the \emph{projection} of a tuple \emph{t} on $\set{X}$.
            $\set{X}\set{Y}$ is shorthand for $\set{X} \cup \set{Y}$. The empty set is denoted as $\emptySet{}$.

    \item \textbf{Lists.}
            \lst{X}, \lst{Y}\ and \lst{Z}\ denote \emph{lists}. The empty list is represented as $\emptyLst{}$.
            List \Lst{\A{A}, \A{B}, \A{C}}\ denotes an explicit list.
            \Lst{\A{A}}[\lst{T}]\
            denotes a list with the \emph{head} \A{A}\
            and the \emph{tail} \lst{T}.
            \lst{XY}\ is shorthand for \lst{X}\ concatenate \lst{Y}.
            Set \set{X}\ denotes the {\em set} of elements
            in {\em list} \lst{X}. $\lst{X}^{p}$ denotes any arbitrary permutation of list $\lst{X}$ or set $\set{X}$. Given a set of attributes $\set{X}$, for brevity, we state $\forall i$, $\A{X}_{i}$ to indicate indices $[1, ..., i]$ that have valid ranges ($i \leq |\set{X}|$).
\end{itemize}
\end{normalsize}
\vskip -0.2cm
\line(1,0){240}
\end{table}

\begin{lemma}~\cite{Szl2012,SGG+2013}
\emph{
\label{theorem:relationship}
\emph{
If $\R{R}$ $\models$ \orders{\lst{X}}{\lst{Y}}\ (OD), then $\R{R}$ $\models$ $\set{X} \rightarrow \set{Y}$ (FD).
}
}
\end{lemma}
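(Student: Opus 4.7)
The plan is to reduce the FD claim to a symmetric use of the OD. Concretely, fix $\tup{t}, \tup{s} \in \T{r}$ with $\proj{t}{\set{X}} = \proj{s}{\set{X}}$; the goal is to show $\proj{t}{\set{Y}} = \proj{s}{\set{Y}}$. The bridge between the set-level hypothesis and the list-based order predicate is the observation that when two tuples agree on every attribute of a list, they are weakly ordered by that list in both directions. I would first establish this bridge as an auxiliary claim: for any list $\lst{W}$, if $\proj{t}{\set{W}} = \proj{s}{\set{W}}$ then $\tup{t} \orel[\lst{W}] \tup{s}$ and $\tup{s} \orel[\lst{W}] \tup{t}$. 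This follows by a short induction on the length of $\lst{W}$ using Definition~\ref{definition:operatorle}: the base case $\lst{W} = \emptyLst{}$ is immediate, and in the inductive step $\lst{W} = \Lst{\A{A}}[\lst{T}]$ equality on $\A{A}$ lets us pass to the third clause of the definition and invoke the induction hypothesis on $\lst{T}$.

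Applying the auxiliary claim with $\lst{W} = \lst{X}$, both $\tup{t} \orel[\lst{X}] \tup{s}$ and $\tup{s} \orel[\lst{X}] \tup{t}$ hold. Since $\R{R} \models \orders{\lst{X}}{\lst{Y}}$, Definition~\ref{definition:OrderDependency} then gives $\tup{t} \orel[\lst{Y}] \tup{s}$ and $\tup{s} \orel[\lst{Y}] \tup{t}$.

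It remains to reverse the bridge: I claim that for any list $\lst{W}$, having both $\tup{t} \orel[\lst{W}] \tup{s}$ and $\tup{s} \orel[\lst{W}] \tup{t}$ forces $\proj{t}{\set{W}} = \proj{s}{\set{W}}$. Again this is a straightforward induction on $|\lst{W}|$. In the step case $\lst{W} = \Lst{\A{A}}[\lst{T}]$, inspecting the three clauses of Definition~\ref{definition:operatorle} shows that the only way both directions can hold simultaneously is via the equality-plus-tail clause, yielding $\proj{t}{\A{A}} = \proj{s}{\A{A}}$ together with $\tup{t} \orel[\lst{T}] \tup{s}$ and $\tup{s} \orel[\lst{T}] \tup{t}$; the induction hypothesis then supplies $\proj{t}{\set{T}} = \proj{s}{\set{T}}$, and these combine to give equality on $\set{W}$. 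Instantiating with $\lst{W} = \lst{Y}$ concludes $\proj{t}{\set{Y}} = \proj{s}{\set{Y}}$, which is exactly $\R{R} \models \set{X} \rightarrow \set{Y}$.

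The only subtlety, and the place where one must be careful rather than truly obstructed, is the passage between the list-indexed order predicate $\orel[\cdot]$ and the set-level projections $\proj{t}{\set{\cdot}}$: the two auxiliary inductions above together amount to saying that $\proj{t}{\set{W}} = \proj{s}{\set{W}}$ is equivalent to the conjunction $\tup{t} \orel[\lst{W}] \tup{s} \wedge \tup{s} \orel[\lst{W}] \tup{t}$, independent of the chosen ordering of $\lst{W}$. Once this equivalence is in hand, the lemma is just the OD hypothesis sandwiched between its two directions.
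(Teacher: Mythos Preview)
Your argument is correct: the two short inductions establishing that $\proj{t}{\set{W}} = \proj{s}{\set{W}}$ is equivalent to $\tup{t} \orel[\lst{W}] \tup{s} \wedge \tup{s} \orel[\lst{W}] \tup{t}$ are exactly what is needed, and the step-case analysis in the reverse direction (ruling out the strict-inequality clause by symmetry) is handled properly. Note, however, that the paper does not supply its own proof of this lemma; it merely states the result and cites \cite{Szl2012,SGG+2013}. So there is nothing in the paper to compare against beyond observing that your proof is the standard one and would be accepted without issue.
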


Also, there is a correspondence between FDs and ODs.
%
\begin{theorem}\emph{~\cite{Szl2012,SGG+2013}} \label{theorem:correspondence}
\emph{
$\R{R}$ $\models$ $\set{X} \rightarrow \set{Y}$ \emph{iff} \orders{\lst{X}}{\lst{XY}}, for any list \lst{X}\ over the attributes of \set{X}\ and any list \lst{Y}\ over the attributes of \set{Y}.
}
\end{theorem}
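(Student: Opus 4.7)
The plan is to prove both directions of the biconditional by unwinding Definition~\ref{definition:OrderDependency} and the recursive definition of $\orel[\lst{X}]$ from Definition~\ref{definition:operatorle}, then doing a case split on whether two tuples agree on $\set{X}$.

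For the forward direction, assume $\R{R} \models \set{X}\rightarrow\set{Y}$ and fix arbitrary list orderings $\lst{X}$ of $\set{X}$ and $\lst{Y}$ of $\set{Y}$. Pick any admissible instance $\T{r}$ and any $\tup{t},\tup{s}\in\T{r}$ with $\tup{t}\orel[\lst{X}]\tup{s}$; the goal is $\tup{t}\orel[\lst{XY}]\tup{s}$. I would distinguish two cases. If $\tup{t}[\set{X}]\neq\tup{s}[\set{X}]$, then, by unrolling the recursive definition, the first attribute of $\lst{X}$ on which $\tup{t}$ and $\tup{s}$ differ yields $\tup{t}\orelStrict[\lst{X}]\tup{s}$; because $\lst{XY}$ has $\lst{X}$ as a prefix, this same first point of difference witnesses $\tup{t}\orelStrict[\lst{XY}]\tup{s}$, hence $\tup{t}\orel[\lst{XY}]\tup{s}$. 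If instead $\tup{t}[\set{X}]=\tup{s}[\set{X}]$, then the FD forces $\tup{t}[\set{Y}]=\tup{s}[\set{Y}]$, so $\tup{t}$ and $\tup{s}$ agree on every attribute in $\lst{XY}$, which gives $\tup{t}\orel[\lst{XY}]\tup{s}$ by a trivial induction on the length of $\lst{XY}$ using the third bullet of Definition~\ref{definition:operatorle}.

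For the backward direction, assume $\orders{\lst{X}}{\lst{XY}}$ for some choice of $\lst{X}$ over $\set{X}$ and $\lst{Y}$ over $\set{Y}$, and take any $\tup{t},\tup{s}\in\T{r}$ with $\tup{t}[\set{X}]=\tup{s}[\set{X}]$. Agreement on $\set{X}$ entails both $\tup{t}\orel[\lst{X}]\tup{s}$ and $\tup{s}\orel[\lst{X}]\tup{t}$, so the OD yields $\tup{t}\orel[\lst{XY}]\tup{s}$ and $\tup{s}\orel[\lst{XY}]\tup{t}$ simultaneously. The key substep is an antisymmetry lemma for the weak total order: if $\tup{t}\orel[\lst{L}]\tup{s}$ and $\tup{s}\orel[\lst{L}]\tup{t}$ both hold, then $\tup{t}[\A{A}]=\tup{s}[\A{A}]$ for every $\A{A}$ in the set of $\lst{L}$. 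This follows by a short induction on $|\lst{L}|$: the head attribute cannot satisfy both $\tup{t}[\A{A}]<\tup{s}[\A{A}]$ and $\tup{s}[\A{A}]<\tup{t}[\A{A}]$, so the head values are equal and we recurse on the tail. Applying this to $\lst{XY}$ gives $\tup{t}[\set{Y}]=\tup{s}[\set{Y}]$, establishing the FD.

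The main obstacle, such as it is, will be stating the antisymmetry substep cleanly so that the proof works uniformly for any list permutation; once that micro-lemma is in hand, both directions are short. A secondary bookkeeping point is to observe that the argument is independent of which permutations $\lst{X}$ and $\lst{Y}$ are chosen, so the ``for any list'' quantification in the theorem statement is handled with no extra work: the forward direction quantifies universally over the choice, while the backward direction needs only a single witness, both of which are delivered by the case analysis above.
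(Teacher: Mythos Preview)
The paper does not supply its own proof of this theorem; it is quoted as a known result from \cite{Szl2012,SGG+2013} and stated without argument, so there is no in-paper proof to benchmark your proposal against.

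That said, your argument is correct and is essentially the standard way to establish the correspondence. The two-case split on whether $\tup{t}[\set{X}]=\tup{s}[\set{X}]$ is exactly the right structure, and your antisymmetry micro-lemma (if $\tup{t}\orel[\lst{L}]\tup{s}$ and $\tup{s}\orel[\lst{L}]\tup{t}$ then $\tup{t}$ and $\tup{s}$ agree on every attribute of $\lst{L}$) is the only nontrivial ingredient; its inductive proof on $|\lst{L}|$ is straightforward as you describe. Your handling of the ``for any list'' quantifier is also sound: proving the forward direction for an arbitrary permutation and the backward direction from a single permutation is enough, since the FD side does not depend on attribute order at all.
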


ODs can be violated in two ways. 

\begin{theorem}\emph{~\cite{Szl2012,SGG+2013}}
\label{theorem:orderdependency}
\emph{
$\R{R}$ $\models$ \orders{\lst{X}}{\lst{Y}}\ (OD) \emph{iff}
$\R{R}$ $\models$ $\orders{\lst{X}}{\lst{XY}}$ (FD) and $\simular{\lst{X}}{\lst{Y}}$ (OCD).
}
\end{theorem}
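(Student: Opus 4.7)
The plan is to establish both directions at the tuple level, using Theorem~\ref{theorem:correspondence} to interchange $\orders{\lst{X}}{\lst{XY}}$ with the FD $\set{X} \rightarrow \set{Y}$, and unfolding $\simular{\lst{X}}{\lst{Y}}$ to $\orderEquiv{\lst{XY}}{\lst{YX}}$ via Definition~\ref{definition:orderCompatible}. A useful pivot, read directly off Definition~\ref{definition:operatorle}, is that $\tup{t}\orel[\lst{Z}]\tup{s}$ together with $\tup{s}\orel[\lst{Z}]\tup{t}$ forces $\tup{t}[\set{Z}] = \tup{s}[\set{Z}]$, while conversely equality on $\set{Z}$ makes both directions hold trivially. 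I will use this pivot repeatedly to shuttle between the list-ordering operator and attribute-wise equality.

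For the forward direction, assume $\orders{\lst{X}}{\lst{Y}}$. The FD is immediate: if $\tup{t}[\set{X}] = \tup{s}[\set{X}]$, both $\tup{t}\orel[\lst{X}]\tup{s}$ and $\tup{s}\orel[\lst{X}]\tup{t}$ hold, and two applications of the OD collapse them to $\tup{t}[\set{Y}] = \tup{s}[\set{Y}]$. For the OCD I would prove each inclusion $\orders{\lst{XY}}{\lst{YX}}$ and $\orders{\lst{YX}}{\lst{XY}}$ by case analysis on whether $\tup{t}$ and $\tup{s}$ agree on $\set{X}$ (the FD propagates agreement to $\set{Y}$, making both sides trivial) or whether one is $\orelStrict[\lst{X}]$-less than the other, in which case I apply the OD to the strict side and unfold the resulting inequality on $\lst{Y}$.

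For the backward direction, assume the FD and the OCD. Given $\tup{t}\orel[\lst{X}]\tup{s}$, the equality subcase $\tup{t}[\set{X}] = \tup{s}[\set{X}]$ immediately yields $\tup{t}[\set{Y}] = \tup{s}[\set{Y}]$ by the FD, and hence $\tup{t}\orel[\lst{Y}]\tup{s}$. The strict subcase $\tup{t}\orelStrict[\lst{X}]\tup{s}$ gives $\tup{t}\orelStrict[\lst{XY}]\tup{s}$ by the head of the list, the OCD converts this to $\tup{t}\orel[\lst{YX}]\tup{s}$, and the $\lst{Y}$-prefix of $\lst{YX}$ then supplies the desired $\tup{t}\orel[\lst{Y}]\tup{s}$.

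The main obstacle is the forward OCD's reverse inclusion $\orders{\lst{YX}}{\lst{XY}}$: from $\tup{t}\orel[\lst{YX}]\tup{s}$ one must rule out the awkward configuration in which $\tup{t}\orelStrict[\lst{Y}]\tup{s}$ while $\tup{s}\orelStrict[\lst{X}]\tup{t}$, since only strictness on $\lst{X}$ licenses invoking the OD. I would dispose of this by applying the OD to $\tup{s}\orel[\lst{X}]\tup{t}$ to obtain $\tup{s}\orel[\lst{Y}]\tup{t}$, which together with the derived $\tup{t}\orel[\lst{Y}]\tup{s}$ forces equality on $\set{Y}$ and contradicts the assumed strictness; the remaining subcases (equality on $\set{X}$, or $\tup{t}\orelStrict[\lst{X}]\tup{s}$) then follow directly from the FD and from unfolding $\lst{XY}$ at its head.
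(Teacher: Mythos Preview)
The paper does not actually prove this theorem; it is imported with citation from \cite{Szl2012,SGG+2013} and stated without argument, so there is no in-paper proof to compare against. Evaluated on its own merits, your tuple-level case analysis is correct and is essentially the standard route: split on whether the two tuples agree on $\set{X}$, use Lemma~\ref{theorem:relationship}/Theorem~\ref{theorem:correspondence} for the FD half, and push inequalities through the prefix structure of $\lst{XY}$ and $\lst{YX}$ for the OCD half.

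One small presentational point on the reverse inclusion $\orders{\lst{YX}}{\lst{XY}}$: you describe the problematic subcase as the configuration $\tup{t}\orelStrict[\lst{Y}]\tup{s}$ together with $\tup{s}\orelStrict[\lst{X}]\tup{t}$, and then handle ``equality on $\set{X}$'' and ``$\tup{t}\orelStrict[\lst{X}]\tup{s}$'' separately. It is worth stating explicitly that, under the hypothesis $\tup{t}\orel[\lst{YX}]\tup{s}$, the subcase $\tup{s}\orelStrict[\lst{X}]\tup{t}$ already \emph{forces} $\tup{t}\orelStrict[\lst{Y}]\tup{s}$ (otherwise equality on $\set{Y}$ would, by the recursive definition, yield $\tup{t}\orel[\lst{X}]\tup{s}$, contradicting $\tup{s}\orelStrict[\lst{X}]\tup{t}$). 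Hence your ``awkward configuration'' is exactly the whole subcase $\tup{s}\orelStrict[\lst{X}]\tup{t}$, and nothing falls through the cracks. With that clarification the three-way case split on $\lst{X}$ is exhaustive and the argument is complete.
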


We are now ready to explain the two sources of OD violations: \emph{splits} and \emph{swaps}~\cite{Szl2012,SGG+2013}.  An OD $\orders{\lst{X}}{\lst{Y}}$ can be violated in two ways, as per Theorem~\ref{theorem:orderdependency}. 
\begin{definition}~\cite{Szl2012,SGG+2013}
A \emph{split} wrt an OD $\orders{\lst{X}}{\lst{XY}}$ (FD) is a pair of tuples \emph{s} and \emph{t} such that $\emph{s}_{\set{X}}$ $=$ $\emph{t}_{\set{X}}$ but
$\emph{s}_{\set{Y}}$ $\not =$ $\emph{t}_{\set{Y}}$.\label{definition:split}
\end{definition}

\begin{definition}~\cite{Szl2012,SGG+2013}
A \emph{swap} wrt $\simular{\lst{X}}{\lst{Y}}$ (OCD) is a pair of tuples \emph{s} and \emph{t} such that $\emph{s} \prec_{\lst{X}} \emph{t}, \mbox{but}\ \emph{t} \prec_{\lst{Y}} \emph{s}$.
\label{definition:swap}
\end{definition}

\begin{example}\label{example:swap} In Table~\ref{table:taxes}, there are three splits with respect to the OD $\orders{ \Lst{ \AB{position} } }{ \Lst{ \AB{position}, \AB{salary} } }$ because $\AB{position}$ does not functionally determine $\AB{salary}$.  The violating tuple pairs are t1 and t4, t2 and t5, and t3 and t6.
There is a swap wrt $\simular{ \Lst{ \AB{salary} } }{ \Lst{ \AB{subgroup} } }$, e.g., over pair of tuples $\tup{t1}$ and $\tup{t2}$.
\end{example}

\subsection{Canonical Forms}

Consonni et al.~\cite{CSM19} use a native list-based canonical form, which is based on decomposing an OD into a FD and an OCD~\cite{Szl2012,SGG+2013}. Recall that based on Theorem~\ref{theorem:orderdependency} ``OD = FD + OCD", as $\orders{\lst{X}}{\lst{Y}}$ \emph{iff} $\orders{\lst{X}}{\lst{XY}}$ (FD) and $\simular{\lst{X}}{\lst{Y}}$ (OCD). The authors exploit this relationship to guide their discovery algorithm through order compatibility. Since they use a list-based representation for ODs, this leads to factorial complexity of OD discovery in the number of attributes. 

Expressing ODs in a natural way relies on lists of attributes, as in the SQL order-by statement. One might well wonder whether lists are inherently necessary.  We provide a polynomial \emph{mapping} of list-based ODs into \emph{equivalent} set-based canonical ODs~\cite{SGG+17,SGG+18}. The mapping allows us to develop an OD discovery algorithm that traverses a much smaller set-containment lattice (to identify candidates for ODs) rather than the list-containment lattice used in Consonni et al.~\cite{CSM19}.

Two tuples, $\tup{t}$ and $\tup{s}$, are \emph{equivalent} over a set of attributes $\set{X}$ if $\proj{t}{\set{X}}$ = $\proj{s}{\set{X}}$. An attribute set $\set{X}$ partitions tuples into \emph{equivalence classes}~\cite{HKP98}. We denote the \emph{equivalence class} of a tuple $\tup{t} \in \T{r}$ over a set $\set{X}$ as 
$\set{E}(\tup{t}_{\set{X}})$, i.e., $\set{E}(\tup{t}_{\set{X}})$ = $\{ \tup{s} \in \T{r}$ $|$ $\proj{s}{\set{X}}$ = $\proj{t}{\set{X}} \}$.
A \emph{partition} of $\T{r}$ over $\set{X}$ is the set of equivalence classes, $\Pi_{\set{X}}$ = $\{ \set{E}(\tup{t}_{\set{X}})$ $|$ $\tup{t} \in \T{r} \}$. For instance, in Table~\ref{table:taxes}, $\set{E}(\tup{t1}_{\brac{\A{year}}})$ = $\set{E}(\tup{t2}_{\brac{\A{year}}})$ = $\set{E}(\tup{t3}_{\brac{\A{year}}})$ = $\{ \tup{t1},  \tup{t2}, \tup{t3}\}$ and $\Pi_{\A{year}}$ = $\{ \{ \tup{t1},  \tup{t2}, \tup{t3}\}, \{ \tup{t4},  \tup{t5}, \tup{t6}\} \}$.

We now present a set-based \emph{canonical form} for ODs.

\begin{definition}~\cite{SGG+17,SGG+18}
An attribute $\A{A}$ is a \emph{constant} within each equivalence class over $\set{X}$, denoted as $\set{X}$: $\ordersCst{\emptyLst{}}{\A{A}}$, if $\orders{\lst{X}^{p}}{\lst{X}^{p} \A{A} }$.  Furthermore, two attributes,  $\A{A}$ and $\A{B}$, are order-compatible 
within each equivalence class wrt $\set{X}$, denoted as $\set{X}$: $\simular{\A{A}}{\A{B}}$, if $\simular{\lst{X}^{p} \A{A}}{\lst{X}^{p} \A{B} }$.  ODs of the form of $\set{X}$:$\ordersCst{\emptyLst{}}{\A{A}}$ and $\set{X}$: $\simular{ \A{A} }{ \A{B} }$ are called \emph{(set-based) canonical} ODs, and the set $\set{X}$ is called a \emph{context}.
\label{def:canonicalODs}
\end{definition}

\begin{example}\label{example:ODs}
In Table~\ref{table:taxes}, an attribute $\AB{bin}$ is a constant in the context of $\AB{position}$ $($$\AB{posit}$$)$, written as
    $\brac{\AB{position}}$$:$$\ordersCst{\emptyLst{}}{\AB{bin}}$.
This is because
    $\set{E}(\tup{t1}_{\brac{\AB{position}}})$ $\models$$\ordersCst{\emptyLst{}}{\AB{bin}}$,
    $\set{E}(\tup{t2}_{\brac{\AB{position}}})$ $\models$$\ordersCst{\emptyLst{}}{\AB{bin}}$ and
    $\set{E}(\tup{t3}_{\brac{\AB{position}}})$ $\models$$\ordersCst{\emptyLst{}}{\AB{bin}}$.
Also, there is no swap between $\AB{bin}$ and $\AB{salary}$ in the context of $\AB{year}$, i.e.,
    $\brac{\AB{year}}$$:$ $\simular{\AB{bin}}{\AB{salary}}$.
This is because
    $\set{E}(\tup{t1}_{\brac{\AB{year}}})$ $\models$ $\simular{\AB{bin}}{\AB{salary}}$ and
    $\set{E}(\tup{t4}_{\brac{\AB{year}}})$ $\models$ $\simular{\AB{bin}}{\AB{salary}}$.
\end{example}

Based on Theorem~\ref{thm:canSplitOD} and Theorem~\ref{thm:canSwapOD},  list-based ODs in the form of FDs and OCDs, respectively, can be mapped into equivalent set-based ODs. 

\begin{theorem}\emph{~\cite{SGG+17,SGG+18}}
\emph{
$\R{R}$ $\models$ $\orders{\lst{X}}{\lst{X}\lst{Y}}$ \emph{iff} $\forall \A{A} \in \lst{Y}$, $\R{R}$ $\models$ $\set{X}$$:$$\ordersCst{\emptyLst{}}{\A{A}}$.
\label{thm:canSplitOD}
}
\end{theorem}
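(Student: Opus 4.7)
The plan is to reduce both sides of the \emph{iff} to equivalent statements about functional dependencies, and then invoke the standard FD union/decomposition rule. The main work is bookkeeping between the list-based ODs of Definition~\ref{definition:OrderDependency} and the set-based canonical form of Definition~\ref{def:canonicalODs}; the FD-level equivalence itself is classical.

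First I would rewrite the left-hand side. By Theorem~\ref{theorem:correspondence}, applied with $\lst{X}$ a list over $\set{X}$ and $\lst{Y}$ a list over $\set{Y}$, $\R{R} \models \orders{\lst{X}}{\lst{X}\lst{Y}}$ is equivalent to $\R{R} \models \set{X} \rightarrow \set{Y}$. Next I would rewrite each clause of the right-hand side. By Definition~\ref{def:canonicalODs}, $\R{R} \models \set{X}:\ordersCst{\emptyLst{}}{\A{A}}$ means $\R{R} \models \orders{\lst{X}^{p}}{\lst{X}^{p}\A{A}}$ for (any, hence every) permutation $\lst{X}^{p}$ of $\set{X}$, and Theorem~\ref{theorem:correspondence} reduces this in turn to the FD $\R{R} \models \set{X} \rightarrow \{\A{A}\}$. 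Since membership in a list $\lst{Y}$ coincides with membership in its underlying set $\set{Y}$, the theorem thus reduces to the equivalence
\[
\R{R} \models \set{X} \rightarrow \set{Y}
\quad\Longleftrightarrow\quad
\forall \A{A} \in \set{Y},\ \R{R} \models \set{X} \rightarrow \{\A{A}\}.
\]

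Both directions are immediate from the definition of an FD in terms of equivalence classes. For the forward direction, suppose $\set{X} \rightarrow \set{Y}$ holds and fix $\A{A} \in \set{Y}$. Any pair of tuples $\tup{t}, \tup{s}$ with $\proj{t}{\set{X}} = \proj{s}{\set{X}}$ satisfies $\proj{t}{\set{Y}} = \proj{s}{\set{Y}}$, so in particular $\proj{t}{\A{A}} = \proj{s}{\A{A}}$; hence $\set{X} \rightarrow \{\A{A}\}$. For the converse, suppose $\set{X} \rightarrow \{\A{A}\}$ holds for every $\A{A} \in \set{Y}$. For tuples $\tup{t}, \tup{s}$ with $\proj{t}{\set{X}} = \proj{s}{\set{X}}$, we get $\proj{t}{\A{A}} = \proj{s}{\A{A}}$ for every $\A{A} \in \set{Y}$, and therefore $\proj{t}{\set{Y}} = \proj{s}{\set{Y}}$; hence $\set{X} \rightarrow \set{Y}$. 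This can also be phrased directly in the partition language of the excerpt: $\set{X} \rightarrow \{\A{A}\}$ says that $\A{A}$ is constant on each block of $\Pi_{\set{X}}$, and $\set{X} \rightarrow \set{Y}$ says that $\set{Y}$ is constant on each such block, which is equivalent to every individual $\A{A} \in \set{Y}$ being constant there.

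The only point that requires care, and the place I expect a reader to pause, is justifying the translation from $\set{X}:\ordersCst{\emptyLst{}}{\A{A}}$ to the FD $\set{X} \rightarrow \{\A{A}\}$ via an \emph{arbitrary} permutation $\lst{X}^{p}$: one must note that Theorem~\ref{theorem:correspondence} is stated ``for any list $\lst{X}$ over the attributes of $\set{X}$,'' so the permutation-independence built into Definition~\ref{def:canonicalODs} is consistent and does not require a separate argument. Once that is observed, the rest is routine.
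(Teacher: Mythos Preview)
Your argument is correct. The reduction of both sides to functional dependencies via Theorem~\ref{theorem:correspondence} and Definition~\ref{def:canonicalODs}, followed by the elementary FD union/decomposition equivalence $\set{X}\rightarrow\set{Y} \iff \forall\A{A}\in\set{Y},\ \set{X}\rightarrow\{\A{A}\}$, is exactly the right route, and you handle the permutation-independence issue for $\lst{X}^{p}$ properly.

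As for comparison: the paper does not give its own proof of this theorem. It is stated as a cited result from~\cite{SGG+17,SGG+18} and used as background in the foundations section; no argument is supplied here. So there is nothing in the present paper to compare your proof against beyond noting that your derivation is consistent with how the surrounding results (Theorem~\ref{theorem:correspondence} and Definition~\ref{def:canonicalODs}) are set up and intended to be used.
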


\begin{theorem}\emph{~\cite{SGG+17,SGG+18}}
\emph{
$\R{R}$ $\models$ $\simular{\lst{X}}{\lst{Y}}$ \mbox{\emph{iff}} $\forall i,j$, $\R{R}$ $\models$ $\{ \A{X}_{1}, .., \A{X}_{i-1},$ $\A{Y}_{1}, ..,  \A{Y}_{j-1} \}$$:$ $\simular{\A{X}_{i}}{\A{Y}_{j}}$.
\label{thm:canSwapOD}
}
\end{theorem}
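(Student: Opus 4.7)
The plan is to prove both directions by reducing the statement to the presence or absence of a single \emph{swap}, exploiting the recursive structure of the operator $\prec_{\lst{X}}$ from Definition~\ref{definition:operatorle}. Recall that $\R{R} \models \simular{\lst{X}}{\lst{Y}}$ holds exactly when no pair $\tup{t}, \tup{s}$ forms a swap, i.e.\ satisfies $\tup{t} \prec_{\lst{X}} \tup{s}$ and $\tup{s} \prec_{\lst{Y}} \tup{t}$ simultaneously. Likewise, the set-based canonical OCD $\set{Z} : \simular{\A{A}}{\A{B}}$ fails exactly when two tuples agreeing on $\set{Z}$ disagree strictly in opposite directions on $\A{A}$ and $\A{B}$.

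For the \emph{only-if} direction, I would fix indices $i,j$ and suppose, toward contradiction, that some pair $\tup{t}, \tup{s}$ violates $\set{Z} : \simular{\A{X}_i}{\A{Y}_j}$ where $\set{Z} = \{\A{X}_1, \ldots, \A{X}_{i-1}, \A{Y}_1, \ldots, \A{Y}_{j-1}\}$. Then $\tup{t}$ and $\tup{s}$ agree on every $\A{X}_k$ for $k<i$ and on every $\A{Y}_\ell$ for $\ell<j$, while $\tup{t}[\A{X}_i] < \tup{s}[\A{X}_i]$ and $\tup{s}[\A{Y}_j] < \tup{t}[\A{Y}_j]$. Unrolling Definition~\ref{definition:operatorle} $i-1$ (respectively $j-1$) times gives $\tup{t} \prec_{\lst{X}} \tup{s}$ and $\tup{s} \prec_{\lst{Y}} \tup{t}$, contradicting $\R{R} \models \simular{\lst{X}}{\lst{Y}}$.

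For the \emph{if} direction, I would argue contrapositively. Assume some swap witnesses $\R{R} \not\models \simular{\lst{X}}{\lst{Y}}$: tuples $\tup{t}, \tup{s}$ with $\tup{t} \prec_{\lst{X}} \tup{s}$ and $\tup{s} \prec_{\lst{Y}} \tup{t}$. Since $\prec$ is strict, unwinding the recursive definition produces a \emph{smallest} index $i \leq |\lst{X}|$ where $\tup{t}$ and $\tup{s}$ first strictly differ on $\lst{X}$, with $\tup{t}[\A{X}_i] < \tup{s}[\A{X}_i]$, and analogously a smallest $j \leq |\lst{Y}|$ with $\tup{s}[\A{Y}_j] < \tup{t}[\A{Y}_j]$. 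Setting $\set{Z}$ as above, the pair lies in a common equivalence class of $\Pi_{\set{Z}}$ and witnesses a swap between $\A{X}_i$ and $\A{Y}_j$, so $\R{R} \not\models \set{Z} : \simular{\A{X}_i}{\A{Y}_j}$, which is what was needed.

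The step that merits the most care is the unwinding: one must verify that the ``first strict disagreement'' index is well-defined and that collecting prefixes from both lists into the single \emph{set} $\set{Z}$ loses no information relevant to the swap, even when $\lst{X}$ and $\lst{Y}$ share attributes (in which case a shared attribute appearing both before position $i$ on the $\lst{X}$ side and before position $j$ on the $\lst{Y}$ side is simply absorbed by the set union, and the agreement condition is unaffected). The degenerate case $\A{X}_i = \A{Y}_j$ cannot actually arise from a swap, since a single attribute value cannot strictly disagree with itself in two directions; this case is vacuous on both sides of the biconditional and needs only a brief remark.
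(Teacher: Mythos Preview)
The paper does not actually supply a proof of this theorem; it is quoted verbatim from \cite{SGG+17,SGG+18} and immediately followed by the next statement, so there is nothing in the present paper to compare your argument against. That said, your proposal is correct and is precisely the standard swap-based argument used in the cited sources: reduce $\simular{\lst{X}}{\lst{Y}}$ to the nonexistence of a swap, and in each direction locate the first strict-disagreement indices $i$ and $j$ to pass between the list-based swap and the set-based canonical swap in context $\{\A{X}_1,\ldots,\A{X}_{i-1},\A{Y}_1,\ldots,\A{Y}_{j-1}\}$. Your remarks on shared attributes and on the vacuity of the case $\A{X}_i=\A{Y}_j$ are accurate and round out the argument.
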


A list-based OD can be mapped into an equivalent set of set-based ODs via a polynomial mapping.

\begin{theorem}\emph{~\cite{SGG+17,SGG+18}}
\emph{
$\R{R}$ $\models$ $\orders{\lst{X}}{\lst{Y}}$ \emph{iff} $\forall \A{A} \in \lst{Y}$, $\R{R}$ $\models$ $\set{X}$$:$$\ordersCst{\emptyLst{}}{\A{A}}$ and $\forall i,j$, $\R{R}$ $\models$ $\{ \A{X}_{1}$, .., $\A{X}_{i-1}$, $\A{Y}_{1}$, ..,  $\A{Y}_{j-1} \}$$:$ $\simular{\A{X}_{i}}{\A{Y}_{j}}$.
\label{thm:canonicalODs}
}
\end{theorem}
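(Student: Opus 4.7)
The statement to prove is the equivalence
\[
\R{R} \models \orders{\lst{X}}{\lst{Y}} \;\Longleftrightarrow\; \Bigl(\forall \A{A}\in\lst{Y}:\;\R{R}\models \set{X}{:}\ordersCst{\emptyLst{}}{\A{A}}\Bigr)\;\wedge\;\Bigl(\forall i,j:\;\R{R}\models \{\A{X}_1,\ldots,\A{X}_{i-1},\A{Y}_1,\ldots,\A{Y}_{j-1}\}{:}\simular{\A{X}_i}{\A{Y}_j}\Bigr),
\]
and my plan is to obtain it essentially for free by composing the three pieces that are already in the excerpt. The key observation is that the right-hand side splits naturally along the ``split'' and ``swap'' axes: the first conjunct is the set-based canonical form of a functional dependency, and the second conjunct is the set-based canonical form of an order compatibility dependency. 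So the strategy is to reduce the claim to Theorem~\ref{theorem:orderdependency} (OD = FD + OCD), then invoke Theorem~\ref{thm:canSplitOD} on the FD half and Theorem~\ref{thm:canSwapOD} on the OCD half.

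Concretely, first I would apply Theorem~\ref{theorem:orderdependency} to rewrite $\R{R}\models\orders{\lst{X}}{\lst{Y}}$ as the conjunction of $\R{R}\models\orders{\lst{X}}{\lst{XY}}$ (an FD, in list form) and $\R{R}\models\simular{\lst{X}}{\lst{Y}}$ (an OCD). This turns the single list-based OD into two more tractable list-based objects. Next, I would translate each of these halves into set-based canonical form: Theorem~\ref{thm:canSplitOD} rewrites $\R{R}\models\orders{\lst{X}}{\lst{XY}}$ as $\forall \A{A}\in\lst{Y}:\;\R{R}\models \set{X}{:}\ordersCst{\emptyLst{}}{\A{A}}$, which is exactly the first conjunct of the right-hand side; and Theorem~\ref{thm:canSwapOD} rewrites $\R{R}\models\simular{\lst{X}}{\lst{Y}}$ as the conjunction over all $i,j$ of $\R{R}\models \{\A{X}_1,\ldots,\A{X}_{i-1},\A{Y}_1,\ldots,\A{Y}_{j-1}\}{:}\simular{\A{X}_i}{\A{Y}_j}$, which is exactly the second conjunct. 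Both directions of each equivalence are already stated as ``iff'', so conjoining them yields the target equivalence.

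The only place that requires a little care is confirming that the two set-based canonical conjuncts are genuinely independent, i.e., that conjoining them does not accidentally lose information relative to conjoining the two list-based halves. This is immediate because the mappings in Theorems~\ref{thm:canSplitOD} and~\ref{thm:canSwapOD} are stated as logical equivalences for each fixed $\R{R}$, so substitution into the conjunction preserves the biconditional. I would also note explicitly that the first conjunct does not depend on a particular list order for $\lst{X}$ (it is stated with context $\set{X}$), which is consistent with the permutation invariance $\orders{\lst{X}^p}{\lst{X}^p\A{A}}$ built into Definition~\ref{def:canonicalODs}.

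Given that all the machinery is already in place, there is no real obstacle; the ``hard part'' has been absorbed into the previously established Theorems~\ref{theorem:orderdependency}, \ref{thm:canSplitOD}, and~\ref{thm:canSwapOD}. The proof is therefore a short two-line composition, and I would present it as such rather than reprove any of the component equivalences.
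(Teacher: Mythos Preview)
Your proposal is correct and matches the paper's approach: the paper does not spell out a proof of Theorem~\ref{thm:canonicalODs} but presents it immediately after Theorems~\ref{thm:canSplitOD} and~\ref{thm:canSwapOD} with the remark that these two results map the FD and OCD halves of an OD into set-based form, which together with Theorem~\ref{theorem:orderdependency} is exactly the composition you describe.
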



\begin{example}
An OD $\orders{\Lst{\A{A}\A{B}}}{\Lst{\A{C}\A{D}}}$ can be mapped into the following equivalent canonical ODs:
    $\brac{\A{A}, \A{B}}$$:$$\ordersCst{\emptyLst{}}{\A{C}}$,
    $\brac{\A{A}, \A{B}}$$:$$\ordersCst{\emptyLst{}}{\A{D}}$,
    $\{\}$: $\simular{\A{A}}{\A{C}}$,
    $\{ \A{A} \}$: $\simular{\A{B}}{\A{C}}$,
    $\{ \A{C} \}$: $\simular{\A{A}}{\A{D}}$,
    $\{ \A{A}, \A{C} \}$: $\simular{\A{B}}{\A{D}}$.
\end{example}

\section{Completeness Analysis}\label{section:completness}

While the theoretical search space
for FASTOD~\cite{SGG+17,SGG+18} is $O(2^{|\R{R}|})$,
the search space for OCDDISCOVER~\cite{CSM19} is $O(|\R{R}|!)$, which is much larger as it traverses a lattice of attribute \emph{permutations}
(where $|\R{R}|$ denotes the number of attributes
over a relational schema $\R{R}$).
To mitigate the factorial complexity,
the list-based algorithm
in Consonni et al.~\cite{CSM19} uses
pruning rules.
We show that,
despite the authors' claim that their approach discovers
a canonically complete set of ODs,
their pruning rules lead to \emph{incompleteness}. 

Section 3 in Consonni et al.~\cite{CSM19} addresses
their completeness ``proof'' for their OD discovery algorithm.
The authors introduce a notion of \emph{minimality}
of a set of dependencies which is incorrect.
Herein,
a set of dependencies is called \emph{minimal}---%
as it is in previous work on FDs and ODs
\cite{HKP98,SGG+17,SGG+18}---%
if \emph{all} dependencies that logically hold
over a relation schema $\R{R}$ can be inferred
from this minimal (canonical) set of dependencies.%
\footnote{In some previous work~\cite{Ber1979}, minimal dependencies $\set{M}$ also satisfy an additional condition that  
that no proper subset of $\set{M}$ can be used to infer all dependencies.
}
%
That is, a set of dependencies $\set{M}$ is \emph{minimal}
over a table $\R{r}$,
if 
$\{ \orders{\lst{X}}{\lst{Y}}$ $|$ $\set{M} \models \orders{\lst{X}}{\lst{Y}} \}$ is equivalent to 
$\{ \orders{\lst{X}}{\lst{Y}}$ $|$ $\R{r} \models \orders{\lst{X}}{\lst{Y}} \}$.

Thus,
one should be able to infer
from a minimal set of dependencies
via the inference rules (axioms), $\set{I}$,
\emph{all} the dependencies that are valid
over the given instance of the table.
That is,
$\{ \orders{\lst{X}}{\lst{Y}}$ $|$ $\set{M} \vdash_{I} \orders{\lst{X}}{\lst{Y}} \}$ is equal to 
$\{ \orders{\lst{X}}{\lst{Y}}$ $|$ $\R{r} \models \orders{\lst{X}}{\lst{Y}} \}$.
Consonni et al.~\cite{CSM19} use
the set of \emph{sound} and \emph{complete} OD inference rules,
$\set{I}$,
from \cite{Szli2012,SGG+2013}.

Pruning applied by a dependency discovery algorithm, thus,
must respect minimality.
This allows for the \emph{implicit} discovery
of the full set of valid dependencies,
and thus be deemed \emph{complete}.

In \cite{CSM19},
an attribute list is minimal
if it has no embedded order dependency
(the list of attributes is the shortest possible).

\begin{definition}~\cite{CSM19}
An attribute list $\lst{X}$ is \emph{minimal}
if there is no other list of attributes $\lst{X}'$
such that:
\begin{itemize}[nolistsep,leftmargin=*]
    \item $\lst{X}'$ is smaller than $\lst{X}$, and
    \item $\orderEquiv{\lst{X}}{\lst{X}'}$
\end{itemize}
\end{definition}

\begin{example}
$\Lst{ \A{A}, \A{B},\A{A} }$ is \emph{not} minimal as $\orderEquiv{\Lst{ \A{A}, \A{B},\A{A} }}{\Lst{ \A{A}, \A{B} }}$. 
\end{example}

It follows then
that an OCD is minimal in \cite{CSM19}
\emph{if and only if}
there are no repeated attributes in the OCD.
That is,
there are no repeated attributes
within the \emph{left} or within the \emph{right} list
of the minimal OCD, as each is a minimal attribute list,
\emph{and} there is no repeated attribute
between \emph{left} and \emph{right}.

\begin{definition}\label{minimalOCD}
\cite{CSM19}
An OCD $\simular{\lst{X}}{\lst{Y}}$ is \emph{minimal}
if
\begin{itemize}[nolistsep,leftmargin=*]
\item $\lst{X}$ and $\lst{Y}$ are minimal attribute lists
    \emph{and}
\item $\set{X}$ $\cap$ $\set{Y}$ $=$ $\emptyset$.
\end{itemize}
\end{definition}

Definition \ref{minimalOCD} \cite{CSM19}
of \emph{minimality}
with no permitted repeated attributes
is at the heart of their incompleteness problem,
as it does not allow for the inference
of all the dependencies
that are valid over the given table. 
Theorem~\ref{theorem:notm} states this,
that an OCD with a common prefix
between \emph{left} and \emph{right}
(repeated attributes)
can hold over a table,
while no OCD without repeated attributes holds.
Our proof of Theorem~\ref{theorem:notm}
is by example,
offering a simple counter-example to the completeness premise
in \cite{CSM19}.



\begin{theorem}\label{theorem:notm}
\emph{
$\R{R}$ $\not \models$ 
$\simular{\lst{Y}}{\lst{Z}}$, 
$\R{R}$ $\not \models$ 
$\simular{\lst{XY}}{\lst{Z}}$ and
$\R{R}$ $\not \models$ 
$\simular{\lst{Y}}{\lst{XZ}}$ 
do not imply $\R{R}$ $\not \models$ $\simular{\lst{XY}}{\lst{XZ}}$
}
\end{theorem}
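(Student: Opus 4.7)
My plan is to prove Theorem~\ref{theorem:notm} by constructing a small explicit counter-example with three attributes $\A{X}$, $\A{Y}$, $\A{Z}$, each appearing as a single-element list ($\lst{X} = \Lst{\A{X}}$, $\lst{Y} = \Lst{\A{Y}}$, $\lst{Z} = \Lst{\A{Z}}$), and exhibiting a table \T{r}\ in which $\simular{\lst{XY}}{\lst{XZ}}$ holds while $\simular{\lst{Y}}{\lst{Z}}$, $\simular{\lst{XY}}{\lst{Z}}$, and $\simular{\lst{Y}}{\lst{XZ}}$ all fail (each witnessed by a swap). Because the inference rules of \cite{Szli2012,SGG+2013} are sound, exhibiting a single such table suffices to rule out any implication in the direction claimed.

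The key structural observation that guides the construction is this: a swap for $\simular{\lst{XY}}{\lst{XZ}}$ must occur between two tuples that agree on \A{X}\ (otherwise the $\lst{XY}$-order and $\lst{XZ}$-order agree on their first coordinate and no swap arises). So if \A{X}\ is a key, i.e., every \A{X}-equivalence-class is a singleton, then $\simular{\lst{XY}}{\lst{XZ}}$ holds vacuously. At the same time, between tuples with distinct \A{X}\ values, we still have complete freedom to place swaps for $\simular{\lst{Y}}{\lst{Z}}$, $\simular{\lst{XY}}{\lst{Z}}$, and $\simular{\lst{Y}}{\lst{XZ}}$. This reduces the problem to finding a small table with all \A{X}\ values distinct in which each of the three ``smaller'' OCDs has a witnessing swap.

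The concrete construction I would use is a cyclic pattern on three tuples:
\begin{center}
\begin{tabular}{|c|c|c|c|}
\hline
 & \A{X} & \A{Y} & \A{Z} \\
\hline\hline
\tup{t1} & 1 & 2 & 3 \\
\hline
\tup{t2} & 2 & 3 & 1 \\
\hline
\tup{t3} & 3 & 1 & 2 \\
\hline
\end{tabular}
\end{center}
I would then verify the four required facts in turn: (i) since all \A{X}-values are distinct, no two tuples share an \A{X}-class, so by Definition~\ref{definition:swap} no swap for $\simular{\lst{XY}}{\lst{XZ}}$ exists; (ii) $\tup{t3} \prec_{\Lst{\A{Y}}} \tup{t1}$ while $\tup{t1} \prec_{\Lst{\A{Z}}} \tup{t3}$, giving a swap for $\simular{\lst{Y}}{\lst{Z}}$; (iii) $\tup{t1} \prec_{\Lst{\A{X},\A{Y}}} \tup{t2}$ while $\tup{t2} \prec_{\Lst{\A{Z}}} \tup{t1}$, giving a swap for $\simular{\lst{XY}}{\lst{Z}}$; (iv) $\tup{t3} \prec_{\Lst{\A{Y}}} \tup{t1}$ while $\tup{t1} \prec_{\Lst{\A{X},\A{Z}}} \tup{t3}$, giving a swap for $\simular{\lst{Y}}{\lst{XZ}}$.

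The only ``hard'' part is really the design step, i.e., recognizing that adding a common prefix \lst{X}\ can eliminate swaps that exist in all three ``reduced'' OCDs; once the cyclic 3-tuple construction is spotted, the verification is mechanical. A secondary subtlety to flag explicitly in the proof is that the counter-example generalizes: one can prepend any list $\lst{X}$ of key-like attributes (with all-distinct combined values) to the lists $\lst{Y}$ and $\lst{Z}$, and the construction still refutes the implication. This emphasizes that the phenomenon being exploited by the counter-example is structural, not an artifact of a particular small table, and directly contradicts the completeness premise underlying Definition~\ref{minimalOCD} in~\cite{CSM19}.
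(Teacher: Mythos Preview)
Your approach is the same as the paper's---exhibit a single concrete table that witnesses the non-implication---and your construction is in fact tighter: the paper uses four tuples over attributes $\A{A},\A{B},\A{C}$ (with $\A{A}$ a key, just as in your table), whereas your cyclic three-tuple example already suffices and makes the ``$\A{X}$ is a key, hence $\simular{\lst{XY}}{\lst{XZ}}$ holds vacuously'' mechanism explicit. That structural observation is a nice addition the paper does not spell out.

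There is, however, a slip in your verification of~(ii). For the pair $(\tup{t3},\tup{t1})$ you have $\tup{t3}[\A{Y}]=1<2=\tup{t1}[\A{Y}]$ and $\tup{t3}[\A{Z}]=2<3=\tup{t1}[\A{Z}]$, so both orders agree and this is \emph{not} a swap for $\simular{\Lst{\A{Y}}}{\Lst{\A{Z}}}$. The table still works: take instead $(\tup{t1},\tup{t2})$, where $\tup{t1}\prec_{\Lst{\A{Y}}}\tup{t2}$ (since $2<3$) while $\tup{t2}\prec_{\Lst{\A{Z}}}\tup{t1}$ (since $1<3$). With that correction the proof goes through unchanged.
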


\begin{proof}
It suffices to construct a table in which the OCD
of the form

\begin{itemize}[nolistsep]
\item \(\simular{\lst{XY}}{\lst{XZ}}\)
\end{itemize}

\noindent
holds, but OCDs

\begin{itemize}[nolistsep]
\item $\simular{\lst{Y}}{\lst{Z}}$,
\item $\simular{\lst{XY}}{\lst{Z}}$, and
\item $\simular{\lst{Y}}{\lst{XZ}}$
\end{itemize}

\noindent
do not.

Consider Table~\ref{table:sample} constructed over attributes $\A{A}$, $\A{B}$ and $\A{C}$. 
In Table~\ref{table:sample},
an OCD $\simular{\Lst{\A{A},\A{B}}}{\Lst{\A{A},\A{C}}}$ holds,
but $\simular{\Lst{\A{B}}}{\Lst{\A{C}}}$,
$\simular{\Lst{\A{AB}}}{\Lst{\A{C}}}$, and 
$\simular{\Lst{\A{B}}}{\Lst{\A{AC}}}$ do not. 
\end{proof}

In \cite{CSM19},
the authors only show---%
as is stated in Theorem~\ref{theorem:ifm} below---%
that OCDs of the form $\simular{\lst{XY}}{\lst{XZ}}$ can be derived from $\simular{\lst{Y}}{\lst{Z}}$ 
(Theorem 3.5 via Theorem 3.10 in \cite{CSM19}).


\begin{theorem}\label{theorem:ifm}
\emph{~\cite{CSM19}}
\emph{%
    If $\R{R}$ $\models$ 
    $\simular{\lst{Y}}{\lst{Z}}$, then $\R{R}$ $\models$ $\simular{\lst{XY}}{\lst{XZ}}$%
}
\end{theorem}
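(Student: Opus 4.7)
The plan is to work via the swap characterization of OCD satisfaction. By Definition~\ref{definition:swap}, $\R{R} \models \simular{\lst{U}}{\lst{V}}$ precisely when there is no pair of tuples $\tup{t}, \tup{s}$ for which $\tup{t} \orelStrict[\lst{U}] \tup{s}$ while $\tup{s} \orelStrict[\lst{V}] \tup{t}$. So it suffices to show the contrapositive: any swap witnessing a violation of $\simular{\lst{XY}}{\lst{XZ}}$ can be turned into a swap witnessing a violation of $\simular{\lst{Y}}{\lst{Z}}$.

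First I would assume, toward contradiction, that some pair $\tup{t}, \tup{s}$ realizes a swap with respect to $\simular{\lst{XY}}{\lst{XZ}}$, so $\tup{t} \orelStrict[\lst{XY}] \tup{s}$ and $\tup{s} \orelStrict[\lst{XZ}] \tup{t}$. Next, I would unfold the lexicographic definition of $\orel$ (Definition~\ref{definition:operatorle}) on a concatenated list, which yields the dichotomy: $\tup{t} \orelStrict[\lst{XY}] \tup{s}$ holds iff either (i) $\tup{t} \orelStrict[\lst{X}] \tup{s}$, or (ii) $\proj{t}{\set{X}} = \proj{s}{\set{X}}$ and $\tup{t} \orelStrict[\lst{Y}] \tup{s}$; and symmetrically for $\tup{s} \orelStrict[\lst{XZ}] \tup{t}$.

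Then I would rule out the cross cases. If $\tup{t} \orelStrict[\lst{X}] \tup{s}$, then by antisymmetry of the strict lex order on $\lst{X}$, neither $\tup{s} \orelStrict[\lst{X}] \tup{t}$ nor $\proj{s}{\set{X}} = \proj{t}{\set{X}}$ can hold, so the second swap condition fails; symmetrically if we start from $\tup{s} \orelStrict[\lst{X}] \tup{t}$. Hence the only consistent case is $\proj{t}{\set{X}} = \proj{s}{\set{X}}$, which forces $\tup{t} \orelStrict[\lst{Y}] \tup{s}$ and $\tup{s} \orelStrict[\lst{Z}] \tup{t}$. But this is exactly a swap witnessing the violation of $\simular{\lst{Y}}{\lst{Z}}$, contradicting the hypothesis.

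I do not anticipate any real obstacle; the entire argument is a case analysis on the head of the lexicographic comparison, and the key observation is simply that once the two orders disagree on the $\lst{XY}$ vs.\ $\lst{XZ}$ comparison, the shared $\lst{X}$-prefix cannot be what breaks the tie, so the tie must be broken on $\lst{Y}$ versus $\lst{Z}$ alone. The only care required is to keep strict and non-strict versions of $\orel$ straight when invoking Definition~\ref{definition:operatorle}.
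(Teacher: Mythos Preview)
Your argument is correct. The contrapositive via swaps works exactly as you outline: the lexicographic dichotomy for $\orelStrict[\lst{XW}]$ (either strict on the $\lst{X}$-prefix, or tied on $\lst{X}$ and strict on $\lst{W}$) follows by a straightforward induction on $\lst{X}$ from Definition~\ref{definition:operatorle}, and then the shared prefix forces the tie case on both sides, yielding a swap for $\simular{\lst{Y}}{\lst{Z}}$. One minor citation point: Definition~\ref{definition:swap} only \emph{names} what a swap is; the equivalence ``OCD holds iff no swap exists'' is really the content of Theorem~\ref{theorem:orderdependency} and the surrounding discussion, not the definition itself. This does not affect the mathematics.

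Regarding comparison with the paper: the paper does not give its own proof of this statement. It merely records the theorem as true and attributes it to Consonni et al., pointing to their Theorem~3.5 via Theorem~3.10, i.e., a syntactic derivation through the sound and complete OD inference rules. Your route is instead a direct \emph{semantic} argument via the swap characterization. What you gain is self-containment (no appeal to the axiom system) and transparency about \emph{why} the common prefix is harmless in this direction; what the axiomatic route buys is that the same machinery handles many such closure properties uniformly without a bespoke case analysis each time.
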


\begin{table}[t]
\begin{center}
\caption{Showing incompleteness.}
\label{table:sample}
    \begin{tabular}{|c||c|c|c|c|c|c|c|c|c|}
        \hline
        \A{\#} & \A{A} & \A{B} & \A{C} \Tstrut\\
        \hline \hline
        $\tup{t1}$ & 0 &  0 &  1  \Tstrut\\
        \hline
        $\tup{t2}$ & 1 &  1 &  0  \Tstrut\\
        \hline
         $\tup{t3}$ & 2 &  3 &  2  \Tstrut\\
        \hline
         $\tup{t4}$ & 3 &  2 &  3  \Tstrut\\
        \hline
    \end{tabular}
\end{center}
\end{table}

Theorem \ref{theorem:ifm} \cite{CSM19} is \emph{true}.
The flaw in their logic is that this theorem
proves only \emph{one} direction
(the ``if'' of an intended ``if and only if'').
The ``only if'' (not proved by the theorem) is implicitly
assumed as \emph{true}, though
(while it assuredly is not).
It follows that
their claim of canonical completeness 
for their definition of minimal OCDs is incorrect
(Section 3.3 in \cite{CSM19}).
OCDs with common prefixes between its \emph{left} and \emph{right}
attribute lists
are \emph{not} redundant,
by Theorem~\ref{theorem:notm}.
This leads to an \emph{incomplete} approach for OD discovery,
as the recovery of the full set of valid dependencies
is not possible. 

Details of the OD discovery algorithm, OCDDISCOVER,
by Consonni et al.~\cite{CSM19} are presented
in their Section 4.
Let $\set{U}$ be a set of attributes
over a relation schema $\R{R}$.
In the first level of the lattice,
they generate candidates
of the form $\simular{\A{A}}{\A{B}}$,
where $\A{A}, \A{B} \in \set{U}$ and $\A{A}$ $\not =$ $\A{B}$.
(An OCD $\simular{\A{B}}{\A{A}}$ is not generated
as it is equivalent to $\simular{\A{A}}{\A{B}}$.)
At each level
of the lattice (Fig.~\ref{figure:lattice2}),
if the candidate $\simular{\lst{X}}{\lst{Y}}$ is
order compatible,
they generate dependencies for the next level of the lattice.
For each attribute not already present in the OCD,
for each attribute
$\A{A} \in \set{U} \setminus \{ \set{X} \cup \set{Y} \}$,
they add it to the right of each attribute list;
i.e.,
$\simular{\lst{X}\A{A}}{\lst{Y}}$ and $\simular{\lst{X}}{\lst{Y}\A{A}}$.
Thus,
important OCDs with repeated attributes
in a common prefix
are never considered
(as is consistent with their
incorrect definition of minimality for OCDs).
For example, an OCD $\simular{ \Lst{\AB{year}, \AB{month}} }{ \Lst{ \AB{year}, \AB{week}} }$
would be missed. As a consequence, the authors do not discover ODs with repeated attributes, such as $\orders{ \Lst{ \AB{year, salary} } }{ \Lst{ \AB{year}, \AB{bin} } }$ (recall Table~\ref{table:taxes}).

\begin{figure}[t]
    \center
    \includegraphics[scale=.99]{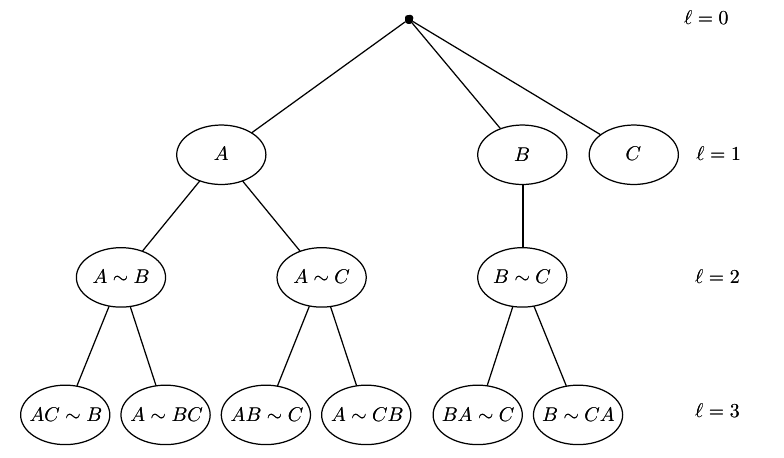}
        \vskip -0.1cm
     \caption{Lattice permutation tree.}
     \label{figure:lattice2}
        \vskip -0.1cm
\end{figure}

In contrast,
our FASTOD algorithm~\cite{SGG+17,SGG+18} is \emph{complete}
for OD discovery.
It does not miss dependencies with common prefixes.
This is because the algorithm considers
as candidates
dependencies of the set-based form: OCD $\{ \set{X} \}$: $\simular{\A{A}}{\A{B}}$.
Thus,
dependencies with common prefixes
are considered.
(This is built into the \emph{context}, set-based notation
used in~\cite{SGG+17,SGG+18},
and cannot be missed when using this representation.)

\section{Experimental Analysis}\label{section:evaluation}

We demonstrate
that the experimental analysis
in Consonni et al.~\cite{CSM19}
that compares their OD discovery algorithm, OCDDISCOVER,
with ours, FASTOD \cite{SGG+17,SGG+18},
is incorrect.
The authors misinterpret the set-based canonical representation
for ODs as introduced in~\cite{SGG+17,SGG+18}
and as used in FASTOD.
They conflate OCDs and ODs as we report them
when reporting the results.
In \cite{SGG+17,SGG+18},
we report the numbers of found FDs and OCDs.
In \cite{CSM19},
they incorrectly report these as the FDs and ODs, respectively, 
that we found.
This occurs in their Table 6,
where, for instance,
they report 400 ODs and 89,571 FDs found by FASTOD,
whereas this should be 400 OCDs and 89,571 FDs, respectively.

As a consequence of this misunderstanding
of the set-based canonical representation
for ODs~\cite{SGG+17,SGG+18},
the authors in \cite{CSM19} claim
that the implementation of FASTOD finds ODs
that are not present in the data.
As an example of this,
they provide the OD $\orders{\Lst{\A{B}}}{\Lst{\A{A},\A{C}}}$
over Table~\ref{table:bug}~\cite{CSM19}.
However,
the FASTOD algorithm implementation in question finds
the following ODs with respect to Table~\ref{table:bug},
where clearly the OD $\orders{\Lst{\A{B}}}{\Lst{\A{A},\A{C}}}$ is
not present.

\begin{table}[t]
\begin{center}
\caption{Verifying correctness of implementation.}
\label{table:bug}
    \begin{tabular}{|c|c|c|c|c|}
        \hline
        \A{\#} & \A{A} & \A{B} & \A{C} & \A{D} \Tstrut\\
        \hline \hline
        $\tup{t1}$ & 1 &  3 &  1 & 1  \Tstrut\\
        \hline
        $\tup{t2}$ & 2 &  3 &  3 & 2  \Tstrut\\
        \hline
         $\tup{t3}$ & 2 &  3 &  2 & 2  \Tstrut\\
        \hline
        $\tup{t4}$ & 2 &  5 &  2 & 2  \Tstrut\\
        \hline
        $\tup{t5}$ & 3 &  1 &  2 & 3  \Tstrut\\
        \hline
        $\tup{t6}$ & 4 &  4 &  4 & 2  \Tstrut\\
        \hline
         $\tup{t7}$ & 4 &  5 &  3 & 2  \Tstrut\\
        \hline
    \end{tabular}
\end{center}
\vskip -0.4cm
\end{table}

\vskip +0.2cm
\begin{enumerate}[nolistsep]
    \item OCD $\{ \A{D} \}$: $\simular{\A{A}}{\A{C}}$
    \item OCD $\{ \A{C} \}$: $\simular{\A{A}}{\A{D}}$
    \item FD $\brac{ \A{A} }$$:$$\orders{\emptyLst{}}{\A{D}}$
    \item OCD $\{ \A{B} \}$: $\simular{\A{A}}{\A{D}}$
    \item OCD $\{ \A{B} \}$: $\simular{\A{C}}{\A{D}}$
    \item OCD $\{ \A{B} \}$: $\simular{\A{A}}{\A{C}}$
    \item FD $\brac{ \A{B}, \A{C} }$$:$$\orders{\emptyLst{}}{\A{D}}$
    \item FD $\brac{ \A{B}, \A{C} }$$:$$\orders{\emptyLst{}}{\A{A}}$
    \item FD $\brac{ \A{A}, \A{B} }$$:$$\orders{\emptyLst{}}{\A{C}}$
    \item OCD $\{ \A{C}, \A{D} \}$: $\simular{\A{A}}{\A{B}}$
\end{enumerate}
\vskip +0.2cm

The authors confuse the OCD
$\{ \A{B} \}$: $\simular{\A{A}}{\A{C}}$
with the OD $\orders{\Lst{\A{B}}}{\Lst{\A{A},\A{C}}}$.
Consequently,
they falsely assert that the reason the number of ODs found
by OCDDISCOVER and FASTOD differ is
due to an implementation error
in the implementation of FASTOD that we provided them.%
\footnote{%
    While Consonni et al.~\cite{CSM19} state
    that they were not able to isolate and resolve
    the root cause of what they felt was incorrect behavior
    in the implementation of FASTOD
    (which we had provided to them
    at their request for ``ensuring fairness and reproducibility''),
    they never contacted us to help resolve it.
}
The real reason that the number of reported dependencies differ,
however,
is, of course, that OCDDISCOVER \cite{CSM19} is \emph{incomplete}.
The claim that they outperform the state-of-art despite a much worst asymptotic complexity,
when tested in practice on real datasets, is invalid.

The authors in Consonni et al.~\cite{CSM19}\cite{CSM19} also state that FASTOD considers
all columns to be of type string,
while their code also considers real and integer numbers.
While a minor point,
we wish to clarify that the implementation
we sent the authors does discover ODs
over data types including real and integer numbers.
The dependencies 1--10 reported
in Table~\ref{table:bug} remain the same,
regardless of using numerical or string data type,
given that the values are in the range of 1 to 5. 




\section{Conclusions}\label{section:conc}
In this article, we have conducted a detailed analysis of the correctness of the results in the recent article by Consonni et al.~\cite{CSM19} concerning the order dependency discovery problem. We have shown that, for the main claimed results related to the OD discovery problem, there are fundamental errors and omissions in the proof or experiments. 

%






\bibliographystyle{abbrv}
\bibliography{main}

\end{document}